\newcommand{\N}{\mathbb{N}}
\newcommand{\C}{\mathbb{C}}
\newcommand{\R}{\mathbb{R}}
\newcommand{\Z}{\mathbb{Z}}
\newcommand{\ena}[1]{\mathrm{e}^{#1}}
\DeclareMathOperator{\sgn}{\mathrm{sgn}}
\newcommand{\dd}{\mathrm{d}}
\newcommand{\dx}{\frac{\dd}{\dd x}}
\newcommand{\dom}{\mathrm{Dom}}
\newcommand{\tr}{\mathrm{Tr}}
\newcommand{\G}[1]{\mathcal{G}_{#1}}
\DeclareMathOperator*{\nlim}{n-lim}
\DeclareMathOperator*{\esssupp}{ess\,supp}
\newtheorem{theorem}{Theorem}[section]
\newtheorem{lemma}[theorem]{Lemma}
\newtheorem{corollary}[theorem]{Corollary}
\newtheorem{proposition}[theorem]{Proposition}
\theoremstyle{definition}
\newtheorem{remark}[theorem]{Remark}
\title[Approximation of relativistic point interactions]{Approximation of one-dimensional relativistic point interactions by regular potentials revised}
\author{Mat\v{e}j Tu\v{s}ek}
\address{Department of Mathematics\\ Faculty of Nuclear Sciences and Physical Engineering \\ Czech Technical University in Prague\\ Trojanova 13\\ 120 00 Prague\\ Czechia}
\email{matej.tusek@fjfi.cvut.cz}
\date{\today}
\begin{document}

\begin{abstract}
 We show that the one-dimensional Dirac operator with quite general point interaction may be approximated in the norm resolvent sense by the Dirac operator with a scaled regular potential of the form $1/\varepsilon~h(x/\varepsilon)\otimes B$, where $B$ is a suitable $2\times 2$ matrix. Moreover,  we prove that the limit does not depend on the particular choice of $h$ as long as it integrates to a constant value. 
\end{abstract}

\maketitle

\section{Introduction}
The one-dimensional Dirac operator perturbed at one point is an important exactly solvable model of relativistic quantum mechanics. Mathematically, the perturbation is described by a boundary condition at the interaction point. Following \cite{BeDa_94} we will write it as 
\begin{equation} \label{eq:bc_int}
\psi(0+)=\Lambda\psi(0-),
\end{equation}
where $\psi$ is a two-component spinor and $\Lambda$ is from a four-parametric family 
of admissible matrices that lead to self-adjoint realizations of the Dirac operator and will be explicitly characterized later. Note that for convenience and without loss of generality the interaction point coincides with the origin.

The question how to approximate the point interactions by regular potentials is important for two major reasons. First, an approximation sequence may tell us much more about the nature of the point interaction rather then an abstract boundary condition. Second, various short range interactions may be well approximated by the point interactions and the latter are described by analytically solvable models. 
In the relativistic case, this question was addressed rigorously for the first time by \v{S}eba \cite{Se_89}. He focused exclusively on the so-called electrostatic and Lorentz scalar point interactions. Taking the former for example, he started with the Dirac operator with the  potential $1/\varepsilon ~h(x/\varepsilon)\otimes I$ for some $h\in L^1(\R;\R)$. Then he proved that, as $\varepsilon\to 0+$, this operator converges in the norm resolvent sense to the Dirac operator with the point interaction described by the boundary condition
\begin{equation} \label{eq:BC_Seba}
 \begin{split}
 &\frac{\eta}{2}(\psi_1(0+)+\psi_1(0-))=i(\psi_2(0+)-\psi_2(0-))\\
 &\frac{\eta}{2}(\psi_2(0+)+\psi_2(0-))=i(\psi_1(0+)-\psi_1(0-))\\
 \end{split}
\end{equation}
with 
\begin{equation} \label{eq:eta_int}
 \eta=\langle \sgn{(h)}|h|^{1/2},(1-K^2)^{-1}|h|^{1/2}\rangle_{L^2(\R)}.
\end{equation}
Here $K$ is the integral operator on $L^2(\R)$ with the kernel 
$$K(x,y)=\frac{i}{2}|h(x)|^{1/2}\sgn(x-y)\sgn(h(y))|h(y)|^{1/2}.$$

Next results are due to Hughes who found smooth local approximations to all types of the point interactions but only in the strong resolvent topology \cite{Hu_97,Hu_99}. She showed that if $h$ integrates to one then the Dirac operator with the potential $i/\varepsilon~ h(x/\varepsilon)\otimes\sigma_1 A$ converges to the Dirac operator with the point interaction described by the boundary condition \eqref{eq:bc_int} with $\Lambda=\exp{A}$. Of course, $A$ must be such that $\Lambda$ belongs to the admissible class.

The main purpose of this paper is to prove that the approximations found by Hughes converge also in the norm resolvent sense. Furthermore, explicit formulae for transition between $\Lambda$'s and  $A$'s,  which play important role in the proof, will be provided. Moreover, the quantity \eqref{eq:eta_int} will be calculated explicitly. In particular, it is constant once we fix the integral of $h$. Surprisingly, this remained unnoticed until now although the same quantity appears also in some very recent papers that deal with relativistic $\delta$-shell interaction in $\R^3$ and its approximation by short range potentials \cite{MaPi_18,MaPi_17}, where it plays a role of the interaction strength. Essentially the same quantity emerges also during the limiting procedure for other types of the point interactions, see Corollary \ref{cor:eta}. Consequently, the limit operator does not depend on particular choice of $h$ but only on the integral of $h$ and, of course, the matrix $A$. This was already observed by Hughes, although only in the strong resolvent topology. Therefore, one could have arrived to our findings about \eqref{eq:eta_int} indirectly just by combining the results from \cite{Se_89} and \cite{Hu_99}. It is also interesting that the spectrum of $K$ is $h$-independent (again with $\int_\R h$ fixed) and that one can find the eigenfunctions of $K$  explicitly. Finally, a new observation on the renormalization of the coupling constant is presented.

The paper is organized as follows. After reviewing some notation in Section \ref{sec:notation}, we will introduce the point interaction rigorously following \cite{FaGr_87} and \cite{BeDa_94} in Section \ref{sec:point_interaction}. Section \ref{sec:approximation} starts with a concise presentation of the results of Hughes. We also indicate some minor issues with her proof that we fix in a separate Appendix at the very end of the paper. Beside this, Appendix also contains very useful explicit formulae for passing from $\Lambda$ from the family of admissible matrices to all $A$'s such that $\exp{A}=\Lambda$ and $i\sigma_1 A$ is hermitian (and vice versa). Section \ref{sec:approximation} continues with the proof of the main result (Theorem \ref{theo:main}) on approximation of a huge family of the point interactions by regular potentials. This family comprises all special types of the point interactions that appear in literature (electrostatic, Lorenz scalar, $\delta$, and $\delta'$ interactions) and many more. The proof contains several spectral and other results for the operator $K$ (Propositions \ref{prop:K} and \ref{prop:renorm}, Lemma \ref{lem:zero_term}) that appear organically in the text exactly when needed. In Section \ref{sec:renormalization}, we briefly discuss the renormalization of the coupling constant during limiting procedure and show, providing explicit formulae, that for the point interactions from our family only a special type of the renormalization may occur.

\section{Notation} \label{sec:notation}
We will denote by $L^2(M;\mathscr{H})$ the Hilbert space of square-integrable functions on $M$ with values in a Hilbert space $\mathscr{H}$. If $\mathscr{H}=\C$ we will abbreviate $L^p(M;\mathscr{H})$ to $L^p(M)$. If, in addition, $M=\R$ we will sometimes write it just $L^p$ for brevity. We will use the symbol $\langle\cdot,\cdot\rangle$ exclusively for the dot product on $L^2(\R)$. When convenient we will identify $L^2(M)\otimes\C^2$ with $L^2(M;\C^2)$ and similarly for subspaces. We set $i\R:=\{i x\vert x\in\R\}$, $\pi\Z:=\{\pi k\vert\, k\in\Z\}$, $\N:=\{1,2,\ldots\}$, and $\N_0:=\N\cup\{0\}$. The Pauli matrices will be denoted by $(\sigma_i)_{i=1}^3$ and the Hilbert-Schmidt norm by $\|\cdot\|_2$. If $K$ is an integral operator then we will write $K(x,y)$ for its kernel.

\section{One dimensional Dirac operator with point interaction} \label{sec:point_interaction}
Let $M$ be a non-negative constant. Then it is well known that
\begin{equation*}
  H=-i\dx\otimes \sigma_1+M\otimes \sigma_3 
\end{equation*}
is self-adjoint on $W^{1,2}(\R)\otimes\C^2$  and 
$$\sigma(H)=\sigma_{ac}(H)=(-\infty,-M]\cup[M,+\infty).$$
Additional point interaction at $x=0$ may be introduced by firstly taking a symmetric restriction of $H$,
$$\dot{H}:=H|_{C_C^\infty(\R\setminus\{0\})\otimes\C^2}$$
and then looking for its self-adjoint extensions. The adjoint operator $\dot H^{*}$ acts as $H$ but on the domain
$$\dom(\dot H^*)=W^{1,2}(\R\setminus\{0\})\otimes\C^2$$
and has deficiency indices $(2,2)$. Therefore, every self-adjoint extension of $\dot{H}$ is described by some $2\times 2$ unitary matrix that is employed in the second von Neumann formula. A more convenient description of the extensions  in terms of a boundary condition at $x=0$ was used in \cite{BeDa_94}. 

Let us put
\begin{equation} \label{eq:Lambda}
 \Lambda=\ena{i\varphi}\begin{pmatrix}
                \alpha & i\beta\\
                -i\gamma & \delta
               \end{pmatrix}
\end{equation}
with $\varphi, \alpha,\, \beta,\, \gamma,\, \delta\in\R$ satisfying 
\begin{equation} \label{eq:lambda_det}
\alpha\delta-\beta\gamma=1.
\end{equation}
Note that one can only consider $\varphi\in[0,\pi)$ to obtain the whole class of the matrices. Then 
\begin{equation*}
 (H^\Lambda\psi)(x):=\Big(-i\dx\otimes\sigma_1+M\otimes\sigma_3\Big)\psi(x)\qquad (\forall x\in\R\setminus\{0\})
\end{equation*}
with 
\begin{equation*}
 \dom(H^\Lambda):=\{\psi\in W^{1,2}(\R\setminus\{0\})\otimes\C^2\vert\, \psi(0+)=\Lambda\psi(0-)\}
\end{equation*}
defines a self-adjoint extension of $\dot{H}$. In this way we obtain almost all self-adjoint extensions of $\dot{H}$, remaining being limiting cases, see \cite{BeDa_94}. There are also other ways how to parameterize the extensions that may be useful in different situations, cf.  \cite{Ge_Se_87,PaRi_14, AlVi_00,BeMaNe_08,CaMaPo_13}.

\section{Approximation by regular potentials} \label{sec:approximation}

\subsection{Approximation in the strong resolvent sense}

We start this section by looking closer at the results of Hughes \cite{Hu_97,Hu_99}. For every $H^\Lambda$, she found a family of self-adjoint Dirac operators with smooth potentials indexed by $\varepsilon>0$ that converges to $H^\Lambda$ \emph{in the strong resolvent sense} as $\varepsilon\to 0+$. More concretely, she considered  a family $\{h_\varepsilon\}_{\varepsilon>0}$ of smooth functions such that $h_\varepsilon\geq 0,\, \int_{\R}h_\varepsilon=1,\text{ and }\mathrm{supp}(h_\varepsilon)\subset[0,\varepsilon]$, and a matrix $A$ such that $\Lambda=\exp(A)$. Since $\Lambda$ is regular, there is always an $A$ with this property. The  approximating operators are then given by 
\begin{equation} \label{eq:approx}
\begin{split}
 &H^A_\varepsilon:=H+ih_\varepsilon(x)\otimes\sigma_1 A,\\
 &\dom(H^A_\varepsilon):=W^{1,2}(\R)\otimes\C^2.
\end{split}
\end{equation}

However, as far as I can see, there is a gap in her proof of \cite[Theorem 1]{Hu_99}. The first part of the proof  says, for a moment with  arbitrary matrices $A$ and $\Lambda$, that
 \begin{equation} \label{eq:Hu_1}
  \sigma_1 A\sigma_1=-A^*,
 \end{equation} 
i.e., $i\sigma_1 A$ is hermitian, if and only if $\Lambda=\exp(A)$ satisfies 
 \begin{equation} \label{eq:Hu_2}
  \Lambda^*\sigma_1\Lambda=\sigma_1,
 \end{equation}
in which case $\Lambda$ has the form \eqref{eq:Lambda}. As a counterexample, take 
\begin{equation*}
 \Lambda=\begin{pmatrix}
          \alpha & 0\\
          0 & \frac{1}{\alpha}
         \end{pmatrix},
\end{equation*}
with $\alpha>0$. It obeys \eqref{eq:Hu_2} (as any matrix of the form \eqref{eq:Lambda} does), but 
$$A:=\begin{pmatrix}
      \ln{\alpha}+i2\pi m & 0\\
      0 & -\ln{\alpha+i2\pi n}
     \end{pmatrix},
$$
with $m,n\in\Z$, does not fulfill \eqref{eq:Hu_1} unless $m=n$.

Since condition \eqref{eq:Hu_1} is crucial for the self-adjointness of $H^A_\varepsilon$, among all possible $A$'s one should only consider those that obey it. Therefore, the question is: \textit{Does for any $\Lambda$ of the form \eqref{eq:Lambda} exist $A$ such that $\Lambda=\exp(A)$ and \eqref{eq:Hu_1} holds true?} In Appendix, it is demonstrated that the answer is positive. In fact, there is always a whole one or two-parametric family of $A$'s with the desired properties, that can be written explicitly in terms of the entries of $\Lambda$.

\subsection{Approximation in the norm resolvent sense}
In this section we will show that, for a large subfamily of self-adjoint extensions of $\dot H$, the approximating operators $\{H^A_\varepsilon\}$ converge to $H^\Lambda$ (with $\Lambda=\exp(A)$) \emph{in the norm resolvent sense} as $\varepsilon\to 0+$. More concretely,
let $h\in L^1(\R;\R)\cap L^\infty(\R;\R)$ be such that
\begin{equation} \label{eq:norm_cond}
 \int_\R h(x)\dd x=1.
\end{equation}
For any $\varepsilon>0$, define
\begin{equation} \label{eq:h_def}
 h_\varepsilon(x):=\frac{1}{\varepsilon}h\Big(\frac{x}{\varepsilon}\Big)\quad \text{and} \quad \Theta_\varepsilon(x):=\int_{-\infty}^x h_\varepsilon(y)\dd y.
\end{equation}
Then $H_\varepsilon^A$ given by \eqref{eq:approx} is self-adjoint for any matrix $A$  obeying \eqref{eq:Hu_1} because it is a hermitian perturbation of $H$.
Note that $\Theta_\varepsilon$ is absolutely continuous, $\lim_{x\to -\infty}\Theta_\varepsilon(x)=0$, $\lim_{x\to +\infty}\Theta_\varepsilon(x)=1$, and $\lim_{\varepsilon\to 0}\Theta_\varepsilon=\Theta$, where $\Theta$ stands for the Heaviside function, point-wise on $\R\setminus\{0\}$. Furthermore,
let either
\begin{equation} \label{eq:A_1}
 A=i\varphi I, \quad\text{where }\varphi\in\R\setminus\{0,\,\lambda_k^{-1}\vert\, k\in\Z\}, 
\end{equation}
with 
\begin{equation} \label{eq:K_ev}
\lambda_k:=\frac{1}{(2k+1)\pi},
\end{equation}
or 
\begin{equation} \label{eq:A_2}
A:=\begin{pmatrix}
   a & i b\\
   i c & -a
   \end{pmatrix}, \quad\text{where }a,b,c\in\R\text{ and }\det{A}\notin\{\lambda_k^{-2}\vert\, k\in N_0\}.
\end{equation}
Note that \eqref{eq:Hu_1} is satisfied in the both cases.

Then we have
\begin{theorem} \label{theo:main}
 Let $h\in L^{1}(\R;\R)\cap L^\infty(\R;\R)$ obeys \eqref{eq:norm_cond}. Furthermore, let $H^A_\varepsilon$ be as in \eqref{eq:approx} with $h_\varepsilon$ given by \eqref{eq:h_def} and $A$ either of the form  \eqref{eq:A_1} or \eqref{eq:A_2}. Then
 \begin{equation*}
\lim_{\varepsilon\to 0+}\|(H^A_\varepsilon+i)^{-1}-(H^\Lambda+i)^{-1}\|=0  
 \end{equation*}
where $\Lambda=\exp(A)$.
\end{theorem}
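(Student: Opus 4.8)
The plan is to prove norm resolvent convergence at the single point $z=-i$; this suffices, since convergence of the resolvent at one non-real point propagates to the whole common resolvent set. The starting point is a Birman--Schwinger (Konno--Kuroda) factorization of the bounded perturbation $V_\varepsilon:=ih_\varepsilon\otimes\sigma_1 A$. Writing $h_\varepsilon=|h_\varepsilon|^{1/2}\sgn(h_\varepsilon)|h_\varepsilon|^{1/2}$, I set $B_\varepsilon:=|h_\varepsilon|^{1/2}$ (scalar multiplication tensored with $I$) and $C_\varepsilon:=i\,\sgn(h_\varepsilon)\otimes\sigma_1 A$, so that $V_\varepsilon=B_\varepsilon C_\varepsilon B_\varepsilon$. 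With $R:=(H+i)^{-1}$, the second resolvent identity yields
\begin{equation*}
 (H^A_\varepsilon+i)^{-1}=R-R\,B_\varepsilon C_\varepsilon\big(1+Q_\varepsilon\big)^{-1}B_\varepsilon R,\qquad Q_\varepsilon:=B_\varepsilon R\,B_\varepsilon C_\varepsilon .
\end{equation*}
Thus the whole problem reduces to controlling the outer factors $RB_\varepsilon$, $B_\varepsilon R$ and the Birman--Schwinger operator $Q_\varepsilon$ as $\varepsilon\to0+$.

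The decisive tool is a dilation. A direct Fourier computation gives the free resolvent kernel, with $\kappa:=\sqrt{1+M^2}$,
\begin{equation*}
 G(x,y):=(H+i)^{-1}(x,y)=\Big(\tfrac{i}{2}\sigma_1\sgn(x-y)+\tfrac{M\sigma_3-i}{2\kappa}\Big)\ena{-\kappa|x-y|},
\end{equation*}
whose crucial feature is the \emph{scale invariant} jump term $\tfrac{i}{2}\sigma_1\sgn(x-y)$. Conjugating by the unitary dilation $(U_\varepsilon f)(u)=\sqrt{\varepsilon}\,f(\varepsilon u)$, one finds that $U_\varepsilon Q_\varepsilon U_\varepsilon^{-1}$ has kernel $|h(u)|^{1/2}G(\varepsilon u,\varepsilon v)|h(v)|^{1/2}\,i\,\sgn(h(v))\sigma_1 A$; since $\ena{-\kappa\varepsilon|u-v|}\to1$ and $\sigma_1\cdot i\sigma_1 A=iA$, dominated convergence (with $h\in L^1$ giving the Hilbert--Schmidt bound $\|h\|_1^2$) shows $U_\varepsilon Q_\varepsilon U_\varepsilon^{-1}\to Q_0$ in Hilbert--Schmidt norm, where $Q_0=K\otimes(iA)+P\otimes\big(\tfrac{i}{2\kappa}(M\sigma_3-i)\sigma_1 A\big)$. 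Here $K$ is exactly the scalar operator of \eqref{eq:eta_int} and $P$ is the rank-one operator with kernel $|h(u)|^{1/2}\sgn(h(v))|h(v)|^{1/2}$. The same scaling shows that $RB_\varepsilon U_\varepsilon^{-1}$ and $U_\varepsilon B_\varepsilon R$ converge in Hilbert--Schmidt norm to the operators with kernels $G(x,0)|h(v)|^{1/2}$ and $|h(u)|^{1/2}G(0,y)$, while $U_\varepsilon C_\varepsilon U_\varepsilon^{-1}=i\,\sgn(h)\otimes\sigma_1 A$ is $\varepsilon$-independent. Inserting $U_\varepsilon^{-1}U_\varepsilon=1$ between the factors, the correction term converges in operator norm to an explicit finite-rank operator $T_0$, \emph{provided} $1+Q_0$ is boundedly invertible.

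The invertibility of $1+Q_0$ is where the hypotheses on $A$ enter. As a separate ingredient I would establish the spectral picture of $K$: via the substitution $\xi=\Theta(x)$ the operator $K$ is unitarily equivalent to an $h$-independent operator (for $\int_\R h$ fixed), whose eigenvalues are the numbers $\lambda_k$ of \eqref{eq:K_ev}, with explicit eigenfunctions. Because $A$ is $2\times2$, invertibility of $1+Q_0$ reduces to a matrix condition on the joint spectrum of $K$ (and of the rank-one correction $P$) against the eigenvalues of $A$; the exclusions in \eqref{eq:A_1} and \eqref{eq:A_2}, namely $\varphi\notin\{\lambda_k^{-1}\}$ and $\det A\notin\{\lambda_k^{-2}\}$, are designed precisely so that $-1\notin\sigma(Q_0)$. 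This identifies the norm limit $R_0:=R-T_0$ of the resolvents.

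It remains to identify $R_0$ with $(H^\Lambda+i)^{-1}$, and this is the main obstacle. Since the range of $T_0$ lies in the two-dimensional space $\{G(\cdot,0)\xi:\xi\in\C^2\}$ of deficiency-type solutions, $R_0$ automatically has the structure of the resolvent of a self-adjoint extension of $\dot H$ acting as $H$ off the origin; hence $R_0=(H^{\Lambda'}+i)^{-1}$ for some admissible $\Lambda'$, and the task is to show $\Lambda'=\exp(A)=\Lambda$. Reading off $\Lambda'$ amounts to evaluating the effective $2\times2$ coupling $\big(i\,\sgn(h)\sigma_1 A\big)(1+Q_0)^{-1}$ on the relevant $\C^2$ data and comparing the boundary values $\psi(0\pm)$ of functions in the range of $R_0$ with Krein's formula for $H^\Lambda$. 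The hard part is resumming $(1+Q_0)^{-1}$: using the eigenfunctions of $K$ and the normalization $\int_\R h=1$, the resolvent unfolds into a power series in $A$ whose coefficients are generated by the $\lambda_k$, and this series must collapse to the exponential; simultaneously the $M$-dependent contributions of the continuous part $\tfrac{M\sigma_3-i}{2\kappa}$ and of $P$ must cancel so that the limit is mass-independent and self-adjoint. Carrying out this resummation separately in the two cases \eqref{eq:A_1} and \eqref{eq:A_2} is the heart of the argument, and it is exactly where the explicit $\Lambda\leftrightarrow A$ correspondence and the spectral description of $K$ do the real work. (One could shortcut the identification by appealing to the known strong resolvent convergence to $H^\Lambda$, so that the already-established norm limit $R_0$ must coincide with $(H^\Lambda+i)^{-1}$; but in view of the gap flagged after \eqref{eq:Hu_1}, I would prefer the self-contained matching above.)
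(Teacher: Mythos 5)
Your proposal follows, in outline, the paper's own route: the Konno--Kuroda resolvent formula for $H^A_\varepsilon$, rescaling so that the Birman--Schwinger operator converges in Hilbert--Schmidt norm, invertibility of $1+Q$ in the limit, and identification of the limiting finite-rank correction with the one in Krein's formula for $H^\Lambda$. The genuine gap is that the step you yourself call ``the heart of the argument''---inverting $1+Q_0$ explicitly and showing that the resulting finite-rank term reproduces $M^\Lambda(z)$ with $\Lambda=\exp(A)$---is described but never carried out, and it is precisely this step that occupies most of the paper's proof. Concretely, the paper (i) inverts $I+Q_2$ in closed form using $\tilde A^2=\nu^2 I$, which is exactly what the algebraic structure of \eqref{eq:A_1}--\eqref{eq:A_2} buys; (ii) proves Lemma \ref{lem:zero_term}, $\langle v,K(I-\nu^2K^2)^{-1}u\rangle=0$; (iii) proves Proposition \ref{prop:renorm}, $\eta=\langle v,(I-\nu^2K^2)^{-1}u\rangle=\frac{2}{\nu}\tan\frac{\nu}{2}$, via the combinatorial evaluation of $\langle v,K^{2n}u\rangle$; and (iv) matches the resulting matrix $M(z)$ of \eqref{eq:Mz} against $M^\Lambda(z)$ using the explicit $A\leftrightarrow\Lambda$ formulae of the Appendix. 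None of (i)--(iv) appears in your proposal beyond the statement that a ``resummation'' should ``collapse to the exponential.'' Moreover, your claim that the exclusions in \eqref{eq:A_1}--\eqref{eq:A_2} are ``designed precisely so that $-1\notin\sigma(Q_0)$'' is not correct as stated: by Proposition \ref{prop:K} those exclusions only make $I+Q_2$ (equivalently $I-\nu^2K^2$) boundedly invertible; invertibility of the full $1+Q_0$ additionally requires inverting the finite-rank correction, i.e.\ the matrix $I-\frac{\zeta(z)\eta}{2}\sigma_1A$, and this rests on $\eta\in\R$ together with $\sigma(\sigma_1 A)\subset i\R$---a fact that becomes available only after the deferred computation \eqref{eq:renorm}.

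Three further points. First, you keep the mass $M$ throughout, whereas the paper disposes of it at the outset: norm resolvent convergence is stable under adding the bounded operator $M\otimes\sigma_3$ to both $H^A_\varepsilon$ and $H^\Lambda$ (\cite[Theorems IV.2.23~c) and IV.2.25]{Kato}), so one may take $M=0$. Without that reduction your $Q_0$ acquires the extra rank-one term $P\otimes\frac{i}{2\kappa}(M\sigma_3-i)\sigma_1A$ and your argument needs the asserted ``$M$-dependent cancellations,'' which are never verified; with the reduction they never arise. Second, your spectral analysis of $K$ via the substitution $\xi=\Theta(x)$ only makes sense when $h\geq 0$, so that $\Theta$ is monotone; the theorem allows sign-changing $h$, for which $K$ is not even normal, and the paper instead solves the eigenvalue equation $K\psi=\lambda\psi$ directly (Proposition \ref{prop:K}). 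Third, the fallback of quoting Hughes' strong resolvent convergence cannot close the identification gap for the theorem as stated: her result concerns smooth, non-negative $h_\varepsilon$ supported in $[0,\varepsilon]$, whereas here $h$ may change sign and need not be compactly supported; to transfer her identification to general $h$ one must first know that the limit depends on $h$ only through $\int_\R h$, and that is again exactly the content of Proposition \ref{prop:renorm} and Corollary \ref{cor:eta}.
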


Let us stress that for two special cases, namely $A=-i\sigma_1$ and $A=-\sigma_2$, the theorem has been already proved by \v{S}eba \cite{Se_89}. Also note that, since the norm resolvent convergence is stable with respect to adding a constant bounded perturbation \cite[Theorems IV.2.23~c) and IV.2.25]{Kato}, we will always assume that $M=0$ in the rest of this section, which is devoted to the proof Theorem \ref{theo:main}. We will focus on the more complex case with $A$ given by \eqref{eq:A_2}. The other choice of $A$ is briefly discussed in Remark \ref{rem:A_1}.

Our starting point is the resolvent formula by Kato \cite{Ka_66}, see also \cite{KoKu_66} or \cite{AlGe_04}, that was used by \v{S}eba too. Let  $\G{z}$ denote the integral kernel of the resolvent of $H$ at the point $z\in\C\setminus\sigma(H)=\C\setminus\R$,
\begin{equation*}
\G{z}(x,y)\equiv(H-z)^{-1}(x,y)=\frac{i}{2}(\zeta(z)I+\sgn(x-y)\sigma_1)\ena{iz\zeta(z)|x-y|}, 
\end{equation*}
where
\begin{equation*}
 \zeta(z):=\sgn(\Im z).
\end{equation*}
Then the resolvent formula yields
\begin{equation}
 (H^A_\varepsilon-z)^{-1}=(H-z)^{-1}-C_\varepsilon(1+Q_\varepsilon)^{-1}D_\varepsilon
\end{equation}
with $C_\varepsilon, Q_\varepsilon$, and $D_\varepsilon$ being the integral operators on $L^2(\R;\C^2)$ with the kernels
\begin{align*}
 &C_\varepsilon(x,y)=\G{z}(x,\varepsilon y)v(y),\\
 &Q_\varepsilon(x,y)=iu(x)\sigma_1 A\,\G{\varepsilon z}(x,y)v(y),\\
 &D_\varepsilon(x,y)=iu(x)\sigma_1 A\,\G{z}(\varepsilon x,y),
\end{align*}
where
\begin{equation*}
 u:=\sqrt{|h|}\quad \text{and} \quad v:=\sgn(h)\sqrt{|h|}.
\end{equation*}
One can check that $C_\varepsilon, D_\varepsilon$, and $Q_\varepsilon$ are Hilbert-Schmidt operators. Point-wise limits of their kernels, i.e.,
\begin{align*}
 &C(x,y):=\lim_{\varepsilon\to 0+}C_\varepsilon(x,y)=\G{z}(x,0)v(y),\\
 &Q(x,y):=\lim_{\varepsilon\to 0+}Q_\varepsilon(x,y)=iu(x)\sigma_1 A\,\G{0+}(x,y)v(y),\\
 &D(x,y):=\lim_{\varepsilon\to 0+}D_\varepsilon(x,y)=iu(x)\sigma_1 A\,\G{z}(0,y),
\end{align*}
are also associated with some Hilbert-Schmidt operators, say $C,Q$, and $D$, respectively. Here, $\G{0+}(x,y):=\lim_{\varepsilon\to 0+}\G{\varepsilon z}(x,y)$ remains $z$-dependent.

Using the Fubini and the dominated convergence theorems we infer that 
\begin{equation*}
 \lim_{\varepsilon\to 0 +}\|C_\varepsilon-C\|_{2}=0,\quad  \lim_{\varepsilon\to 0 +}\|Q_\varepsilon-Q\|_{2}=0,\quad  \lim_{\varepsilon\to 0 +}\|D_\varepsilon-D\|_{2}=0.
\end{equation*}
This implies the convergence in the uniform operator topology too.
Now, by the stability of bounded invertibility \cite[Theorem IV.1.16]{Kato}, if $-1\notin\sigma(Q)$ then for all $\varepsilon$ sufficiently small, $-1\notin\sigma(Q_\varepsilon)$ and
\begin{equation*}
 \nlim_{\varepsilon\to 0+} (I+Q_\varepsilon)^{-1}=(I+Q)^{-1}.
\end{equation*}
Here, $\nlim$ denotes the limit in the uniform operator topology. We conclude that
\begin{equation} \label{eq:res_lim}
 \nlim_{\varepsilon\to 0+} (H^A_\varepsilon-z)^{-1}=(H-z)^{-1}-C(I+Q)^{-1}D.
\end{equation} 
The right-hand side of \eqref{eq:res_lim} splits into a sum of the free resolvent and an operator of the rank two or smaller. Note that the latter is $z$-dependent too. This is exactly the decomposition that  appears in the Krein formula for the resolvent of $H^\Lambda$ \cite{BeDa_94}. To compare these two expressions one has to invert the operator $(I+Q)$.

Similarly as in \cite{Se_89}, we start by writing $Q=Q_1+Q_2$, where 
\begin{equation*}
 Q_1(x,y):=-\frac{\zeta(z)}{2}u(x)v(y)\sigma_1 A \quad \text{and} \quad  Q_2:=K\otimes \tilde{A}
\end{equation*}
with
\begin{equation*}
 K(x,y):=\frac{i}{2}u(x)\sgn(x-y)v(y)  \quad \text{and} \quad \tilde{A}:=i\sigma_1 A \sigma_1.
\end{equation*}
Clearly, the rank of $Q_1$ is at most two and $K$ is a Hilbert-Schmidt operator on $L^2(\R)$. We have
\begin{equation} \label{eq:I+Q_inv}
 (I+Q)^{-1}=(I+(I+Q_2)^{-1}Q_1)^{-1}(I+Q_2)^{-1},
\end{equation}
whenever the right-hand side makes sense. One can verify directly that
\begin{equation} \label{eq:I+Q_2_inv}
 (I+Q_2)^{-1}=(I-\nu^2 K^2)^{-1}\otimes I-K(I-\nu^2K^2)^{-1}\otimes\tilde{A},
\end{equation}
where 
$$\nu^2:=\det{A}=bc-a^2$$
is the same as in  Appendix, cf. \eqref{eq:nu_cond}. The candidate for $(I+Q_2)^{-1}$ was found by writing  it as a formal geometric series and taking the fact that, with our choice of $A$, $\tilde A^2=\nu^2 I$  into the account.
Note that the inverse of $(I-\nu^2 K^2)$ exists as a bounded operator if and only if $\nu^{-2}\notin\sigma(K^2)$. Let us look at the spectral properties of $K$ in detail.

\begin{proposition} \label{prop:K}
 $K$ is Hilbert-Schmidt with $\|K\|_2= \frac{1}{2}\|h\|_{L^1}$. For any $k\in\Z$,
 \begin{equation*}
  \psi_k(x):=\frac{1}{\sqrt{\|h\|_{L^1}}}\,u(x)\ena{i(2k+1)\pi \int_{-\infty}^x h(y)\dd y}
 \end{equation*}
 is a normalized eigenfunction of $K$ with the eigenvalue $\lambda_k$ given by \eqref{eq:K_ev}.
 Moreover, $\sigma(K)=\{0\}\cup\{\lambda_k\vert\, k\in\Z\}$ and $0\in\sigma_p(K)$ if and only if $h$ is not non-zero almost everywhere on $\R$. In the positive case, $0$ is an eigenvalue of infinite multiplicity.
 
 If $h\geq 0$ then $K$ is hermitian, therefore, its eigenfunctions form an orthonormal basis.
\end{proposition}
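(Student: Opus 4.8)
The plan is to treat the five assertions largely independently, the genuinely nontrivial one being the identification of the full spectrum. First I would compute $\|K\|_2^2=\frac14\int_\R\int_\R |u(x)|^2|v(y)|^2\,\dd x\,\dd y$ using $|\sgn(x-y)|=1$ a.e.; since $|u|^2=|v|^2=|h|$, this factorizes as $\frac14\|h\|_{L^1}^2$, giving $\|K\|_2=\frac12\|h\|_{L^1}$. The normalization of $\psi_k$ is the same computation with $\int_\R|u|^2=\|h\|_{L^1}$.

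To verify that $\psi_k$ is an eigenfunction I would abbreviate $\Phi(x):=\int_{-\infty}^x h$ and $\omega_k:=(2k+1)\pi$, and observe that $vu=h$ and $\Phi'=h$ give $v(y)\psi_k(y)=\|h\|_{L^1}^{-1/2}h(y)\ena{i\omega_k\Phi(y)}=\|h\|_{L^1}^{-1/2}(i\omega_k)^{-1}\frac{\dd}{\dd y}\ena{i\omega_k\Phi(y)}$. Splitting $\sgn(x-y)$ into two half-line integrals and integrating this exact derivative, the boundary terms at $\pm\infty$ are evaluated through $\Phi(-\infty)=0$, $\Phi(+\infty)=1$, and the crucial identity $\ena{i\omega_k}=\ena{i(2k+1)\pi}=-1$; the contributions collapse to $\frac{2}{i\omega_k}\ena{i\omega_k\Phi(x)}$, and multiplying by $\frac{i}{2}u(x)$ reproduces exactly $\lambda_k\psi_k(x)$ with $\lambda_k=1/\omega_k$.

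The heart of the proposition is $\sigma(K)=\{0\}\cup\{\lambda_k\mid k\in\Z\}$. Since $K$ is Hilbert--Schmidt it is compact, so $0\in\sigma(K)$ and every nonzero spectral point is an eigenvalue, with such points accumulating only at $0$; it therefore suffices to exclude eigenvalues $\mu\notin\{0\}\cup\{\lambda_k\}$. Given $K\phi=\mu\phi$ with $\mu\neq0$, I set $g(x):=\int_\R\sgn(x-y)v(y)\phi(y)\,\dd y$, which is finite and absolutely continuous since $v\phi\in L^1$ by Cauchy--Schwarz ($\|v\|_{L^2}^2=\|h\|_{L^1}$). Then $\mu\phi=\frac{i}{2}ug$, so $\phi$ vanishes where $h=0$ and $v\phi=\frac{i}{2\mu}hg$ everywhere; differentiating yields $g'=2v\phi=\frac{i}{\mu}hg$, whose absolutely continuous solutions are $g=C\ena{\frac{i}{\mu}\Phi}$. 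Comparing the limits $g(\pm\infty)=\pm\int_\R v\phi$ (dominated convergence) with $g(-\infty)=C$ and $g(+\infty)=C\ena{i/\mu}$ forces $C(\ena{i/\mu}+1)=0$; a nonzero eigenfunction needs $C\neq0$, hence $\ena{i/\mu}=-1$, i.e.\ $1/\mu=(2k+1)\pi$ and $\mu=\lambda_k$. This is the step I expect to be the main obstacle, as it is where the discrete quantization $\{(2k+1)\pi\}$ and the simplicity of each $\lambda_k$ (the eigenfunction being determined by $C$ up to a scalar) are pinned down.

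For the zero eigenvalue I would argue as follows. If $\{h=0\}$ has positive measure, then every $\phi\in L^2$ supported there satisfies $v\phi\equiv0$, hence $g\equiv0$ and $K\phi=0$; this subspace is infinite-dimensional, so $0\in\sigma_p(K)$ with infinite multiplicity. Conversely, if $h\neq0$ a.e.\ then $K\phi=0$ gives $ug\equiv0$, so $g\equiv0$ by continuity, whence $v\phi=\tfrac12 g'\equiv0$ and $\phi\equiv0$, so $0\notin\sigma_p(K)$. Finally, when $h\geq0$ we have $u=v=\sqrt{h}$ a.e., so $\overline{K(y,x)}=K(x,y)$ and $K$ is self-adjoint; compactness together with the spectral theorem then yields an orthonormal basis of eigenfunctions.
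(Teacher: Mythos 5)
Your proof is correct and follows essentially the same route as the paper: your auxiliary function $g$ is just the paper's factorization $\psi(x)=u(x)f(x)$ in different notation, and both arguments reduce the eigenvalue equation for $\mu\neq 0$ to the ODE $g'=\tfrac{i}{\mu}hg$ with the sign-flip boundary condition at $\pm\infty$, forcing $\ena{i/\mu}=-1$ and hence $\mu=\lambda_k$. Your handling of the Hilbert--Schmidt norm, the zero eigenvalue, and the hermitian case likewise matches the paper's, only spelled out in more detail (compactness, absolute continuity, dominated convergence).
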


\begin{proof}
 The only non-trivial part is to find all eigenpairs. Let $\lambda\in\C$. If $K\psi=\lambda\psi$ then $\psi$ is necessarily factorized as $\psi(x)=u(x)f(x)$. Substituting this back into the eigenvalue equation we see that
 \begin{equation} \label{eq:int_eq}
  \frac{i}{2}\left(\int_{-\infty}^x h(y)f(y)\dd y-\int_x^{+\infty} h(y)f(y)\dd y\right)=\lambda f(x).
 \end{equation}
 Note that $hf\in L^1(\R)$ because we only deal with $\psi\in L^2(\R)$. Hence, for $\lambda\neq 0$, $f$ is absolutely continuous, and so we can take the derivative of \eqref{eq:int_eq}. We arrive at
 \begin{equation*} \label{eq:diff_eq}
  i h(x)f(x)=\lambda f'(x)
 \end{equation*}
 together with the boundary condition 
 \begin{equation} \label{eq:bc}
  \lim_{x\to -\infty}f(x)=-\lim_{x\to +\infty}f(x).
 \end{equation}
 We get $f(x)=C\exp(\frac{i}{\lambda}\int_{-\infty}^x h(y)\dd y)$. Imposing \eqref{eq:bc} on the solution we obtain 
 \begin{equation*}
  1=-\ena{\frac{i}{\lambda}}.
 \end{equation*}
Therefore, $\lambda$ must be as in \eqref{eq:K_ev}.

If $\lambda=0$ then the eigenvalue equation reduces to $hf=v\psi=0$. A non-trivial solution to this equation exists if and only if $h$ is not non-zero almost everywhere, i.e., $ \esssupp (h)\neq \R$. In the positive case, any non-zero $\psi\in L^2(\R\setminus \esssupp (h))$  is an eigenfunction.
\end{proof}

\begin{corollary}
 $\sigma(K^2)=\{0\}\cup\{\lambda_k^2\vert\, k\in\N_0\}$. In more detail, $\lambda_k^2,\, k\in\N_0,$ are twice degenerate eigenvalues of $K^2$ and $0$ is an eigenvalue of $K^2$ if and only if  $h$ is not non-zero almost everywhere on $\R$.
\end{corollary}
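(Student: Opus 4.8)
The plan is to read off everything from Proposition~\ref{prop:K} together with the compactness of $K$. Since $K$ is Hilbert--Schmidt it is compact, hence so is $K^2$, and the spectral mapping theorem for polynomials gives $\sigma(K^2)=\{\lambda^2\mid\lambda\in\sigma(K)\}$. Inserting $\sigma(K)=\{0\}\cup\{\lambda_k\mid k\in\Z\}$ and observing from \eqref{eq:K_ev} that $\lambda_{-k-1}=-\lambda_k$, the nonzero squares are exactly $\{\lambda_k^2\mid k\in\N_0\}$ (the map $k\mapsto(2k+1)^2$ is two-to-one from $\Z$ onto the odd squares, identifying $k$ with $-k-1$), so $\sigma(K^2)=\{0\}\cup\{\lambda_k^2\mid k\in\N_0\}$. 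Because $K^2$ is compact, each nonzero $\lambda_k^2$ is automatically an eigenvalue.

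For the degeneracy I would compute the eigenspace $\ker(K^2-\lambda_k^2 I)$ directly, exploiting the commuting factorization $K^2-\lambda_k^2 I=(K-\lambda_k I)(K+\lambda_k I)$. If $\psi$ lies in this kernel, put $w:=(K+\lambda_k)\psi$ and $w':=(K-\lambda_k)\psi$; then $(K-\lambda_k)w=0$ and $(K+\lambda_k)w'=0$, while $w-w'=2\lambda_k\psi$. By Proposition~\ref{prop:K} the kernels $\ker(K-\lambda_k)$ and $\ker(K+\lambda_k)=\ker(K-\lambda_{-k-1})$ are one-dimensional, spanned by $\psi_k$ and $\psi_{-k-1}$, so $\psi\in\mathrm{span}\{\psi_k,\psi_{-k-1}\}$; conversely both $\psi_k$ and $\psi_{-k-1}$ are eigenvectors of $K^2$ for $\lambda_k^2$. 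Hence the eigenspace is precisely $\mathrm{span}\{\psi_k,\psi_{-k-1}\}$, and it remains to check these two functions are linearly independent. When $h\ge 0$ this is immediate, as $K$ is then hermitian and $\psi_k,\psi_{-k-1}$ belong to the distinct real eigenvalues $\pm\lambda_k$; in general it follows from a short measure-theoretic argument, since on $\{h\neq0\}$ the ratio $\psi_k/\psi_{-k-1}=\exp(2i(2k+1)\pi\int_{-\infty}^{\cdot}h)$ cannot be almost everywhere constant (its exponent has a.e.\ nonzero derivative $h$ there). Thus $\lambda_k^2$ is twice degenerate.

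Finally, for the value $0$ I would show $\ker K^2=\ker K$: if $K^2\psi=0$ then $K\psi\in\ker K$, and when $h$ is non-zero almost everywhere Proposition~\ref{prop:K} gives $\ker K=\{0\}$, forcing $K\psi=0$ and then $\psi=0$; conversely $\ker K\subseteq\ker K^2$ always. Combining with the characterization of $\ker K$ in Proposition~\ref{prop:K} yields that $0$ is an eigenvalue of $K^2$ precisely when $h$ is not non-zero almost everywhere. The main (and really only) obstacle is the multiplicity count for the possibly non-normal operator $K$, where orthogonality of eigenspaces is unavailable; the commuting factorization above is exactly what bypasses this, reducing the count to the one-dimensionality of $\ker(K\mp\lambda_k)$ already recorded in Proposition~\ref{prop:K}.
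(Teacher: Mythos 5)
Your proposal is correct and follows the route the paper intends: the corollary is stated without proof as an immediate consequence of Proposition~\ref{prop:K}, and your spectral-mapping-plus-factorization argument supplies exactly the missing details (the one-dimensionality of $\ker(K\mp\lambda_k)$ is indeed available from the paper's proof of that proposition, where eigenfunctions of a nonzero eigenvalue are shown to be determined up to a multiplicative constant). The only superfluous step is your measure-theoretic check that $\psi_k$ and $\psi_{-k-1}$ are linearly independent: eigenvectors of any linear operator corresponding to distinct eigenvalues are automatically linearly independent, so neither hermiticity of $K$ nor the behaviour of the ratio $\psi_k/\psi_{-k-1}$ is needed there.
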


This is exactly why we assumed that $\det{A}\equiv\nu^2\neq\lambda_k^{-2}$, $k\in\N_0$, in \eqref{eq:A_2}. Note that if $\nu^2\leq 0$ then this is not a restriction at all. Now, in the the next step of inverting $(I+Q)$ we need to find $(I+(I+Q_2)^{-1}Q_1)^{-1}$. Take $g\otimes b\in L^2(\R)\otimes \C^2$ and look for a solution $f\otimes a$ in $L^2(\R)\otimes \C^2$ of the equation
\begin{equation} \label{eq:inv_eq}
 (I+(I+Q_2)^{-1}Q_1)f\otimes a=g\otimes b.
\end{equation}
Substituting for $Q_1$ and $Q_2$ we arrive at
\begin{multline} \label{eq:inverting}
 f\otimes a-\frac{\zeta(z)}{2}\langle v,f\rangle \left((I-\nu^2K^2)^{-1}u\otimes\sigma_1 A a-i K(I-\nu^2K^2)^{-1}u\otimes\sigma_1A^2 a\right)\\
 =g\otimes b.
\end{multline}
If we multiply the both sides of \eqref{eq:inverting} by $v$ with respect to the dot product on $L^2(\R)$ we get
\begin{equation} \label{eq:pert_term}
 \langle v,f\rangle a=\langle v,g\rangle\Big(I-\frac{\zeta(z)\eta}{2}\sigma_1 A\Big)^{-1}b,
\end{equation}
where we put
\begin{equation} \label{eq:eta}
 \eta:=\langle v, (I-\nu^2 K^2)^{-1} u\rangle
\end{equation}
and used the following observation (done already by \v{S}eba in \cite{Se_89} for the special case $\nu^2=\pm 1$).

\begin{lemma} \label{lem:zero_term}
 For all $\nu^2\in\C\setminus\{\lambda_k^{-2}\vert\, k\in\N_0\}$, we have 
 $$\langle v,K(I-\nu^2K^{2})^{-1}u\rangle=0.$$
\end{lemma}
\begin{proof}
Firstly, let $\nu^2\in\C:\ |\nu^2|<4/\|h\|_{L^1}^2$. Then $\|\nu^2K^2\|<1$ and we have
\begin{equation} \label{eq:geom_ser}
  \langle v,K(I-\nu^2K^{2})^{-1}u\rangle=\sum_{n=0}^{+\infty}\nu^{2n}\langle v, K^{2n+1}u\rangle.
\end{equation}
Let $S$ be the integral operator on $L^2(\R)$ with the kernel $S(x,y)=\frac{i}{2}\sgn(x-y)$. Then $K=uSv$ and, for any $f\in L^2(\R;\R)$, $\langle f,Sf\rangle=0$. Therefore, every term on the right-hand side of \eqref{eq:geom_ser} is zero because
\begin{equation} \label{eq:odd_powers}
\langle v, K^{2n+1}u\rangle=\langle v,(uSv)^{2n+1} u\rangle=\langle (hS)^n h, S (hS)^n h\rangle=0.
\end{equation}
Secondly, the function $F: \nu^2\mapsto \langle v,K(I-\nu^2K^{2})^{-1}u\rangle$ is analytic on $\C\setminus\{\lambda_k^{-2}\vert\, k\in\N_0\}$. Therefore, by the identity theorem, $F=0$ on $\C\setminus\{\lambda_k^{-2}\vert\, k\in\N_0\}$.
\end{proof}

Note that the inverse in \eqref{eq:pert_term} always exists because $\sigma(\sigma_1 A)\subset i\R$, $\zeta(z)\in\R$, and $\eta\in\R$. The latter will follow from Proposition \ref{prop:renorm}. Substituting \eqref{eq:pert_term} into \eqref{eq:inverting} we get
\begin{multline} \label{eq:inverse_cand}
 (I+(I+Q_2)^{-1}Q_1)^{-1}=I+\frac{\zeta(z)}{2}\langle v,\cdot\rangle\Big((I-\nu^2K^2)^{-1}u\\ \otimes\sigma_1 A \Big(I-\frac{\zeta(z)\eta}{2}\sigma_1 A\Big)^{-1}-iK(I-\nu^2K^2)^{-1}u\otimes\sigma_1 A^2 \Big(I-\frac{\zeta(z)\eta}{2}\sigma_1 A\Big)^{-1}\Big).
\end{multline}
Since we solved \eqref{eq:inv_eq} only on the decomposable vectors, one has to verify by a direct calculation that this is indeed the two-sided inverse. 
Now, inserting \eqref{eq:inverse_cand} together with \eqref{eq:I+Q_2_inv} into \eqref{eq:I+Q_inv} we finish the computation of 
$(I+Q)^{-1}$. Actually, we only need to know $C(I+Q)^{-1}D$ in \eqref{eq:res_lim}. For its integral kernel we have
\begin{equation*}
 (C(I+Q)^{-1}D)(x,y)=i\G{z}(x,0)\langle v, (I+Q)^{-1}u\rangle\sigma_1 A\G{z}(0,y).
\end{equation*}
Using Lemma \ref{lem:zero_term} we infer that
$$\langle v, (I+Q)^{-1}u\rangle=\eta I+\frac{\zeta(z)\eta^2}{2}\sigma_1 A\Big(I-\frac{\zeta(z)\eta}{2}\sigma_1 A\Big)^{-1}=\eta \Big(I-\frac{\zeta(z)\eta}{2}\sigma_1 A\Big)^{-1}.$$
After some more tedious computation we deduce that
\begin{equation*}
 (C(I+Q)^{-1}D)(x,y)=\G{z}(x,0)M(z)\G{z}(0,y),
\end{equation*}
where 
\begin{equation} \label{eq:Mz}
 M(z):=\frac{-1}{\frac{1}{2}\zeta(z)\eta(b+c)+i(1-\frac{1}{4}\eta^2\nu^2)}
 \begin{pmatrix}
 i\eta c+\frac{1}{2}\zeta(z)\eta^2\nu^2 & -a\eta\\
 a\eta & i\eta b+\frac{1}{2}\zeta(z)\eta^2\nu^2
 \end{pmatrix}.
\end{equation}

It remains to evaluate $\eta$ as a function of $\nu$.
\begin{proposition} \label{prop:renorm}
 For all $\nu\in\C\setminus\{0,\,\lambda_k^{-1}\vert\, k\in\Z\}$,
 \begin{equation} \label{eq:renorm}
  \eta=\langle v, (I-\nu^2 K^2)^{-1} u\rangle=\frac{2}{\nu}\tan\frac{\nu}{2}.
 \end{equation}
\end{proposition}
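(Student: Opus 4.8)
The plan is to obtain $\eta$ by solving the defining equation $(I-\nu^2K^2)w=u$ explicitly, reducing it to a \emph{constant-coefficient} linear ODE whose transfer matrix depends on $h$ only through $\int_\R h=1$. Concretely, set $w:=(I-\nu^2K^2)^{-1}u$, which is a well-defined element of $L^2(\R)$ because $\nu^{-2}\notin\sigma(K^2)=\{0\}\cup\{\lambda_k^2\mid k\in\N_0\}$ by the corollary to Proposition~\ref{prop:K}. Since the range of $K=uSv$ consists of functions of the form $u\cdot(\text{absolutely continuous})$, I may write $w=u\phi$, and then, using $vu=h$,
\[
\eta=\langle v,w\rangle=\int_\R vu\,\phi=\int_\R h\,\phi .
\]
Thus everything reduces to understanding the factor $\phi$.

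Next I would extract the ODE. Writing $(Kw)(x)=\tfrac{i}{2}u(x)g(x)$ with $g(x):=\int_{-\infty}^x h\phi-\int_x^{+\infty}h\phi$, and $(K^2w)(x)=-\tfrac14 u(x)G(x)$ with $G(x):=\int_{-\infty}^x hg-\int_x^{+\infty}hg$, the equation $(I-\nu^2K^2)w=u$ reads $\phi=1-\tfrac{\nu^2}{4}G$ on the support of $h$; I extend this identity to all of $\R$ as the definition of $\phi$. Differentiating gives the first-order system $\phi'=-\tfrac{\nu^2}{2}h\,g$, $g'=2h\,\phi$, i.e. $Y'=h(x)\,M\,Y$ for $Y=(\phi,g)^{\mathsf T}$ and the constant matrix $M=\left(\begin{smallmatrix}0 & -\nu^2/2\\ 2 & 0\end{smallmatrix}\right)$. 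Because $M$ is constant the solution is $Y(x)=\exp\!\bigl(M\,\Theta(x)\bigr)Y(-\infty)$ with $\Theta(x)=\int_{-\infty}^x h$, so that $Y(+\infty)=\exp(M)\,Y(-\infty)$, the normalization $\int_\R h=1$ entering precisely here. This is the crucial point: however $h$ oscillates in sign, only $\int_\R h$ survives, which is the source of the announced $h$-independence. Since $M^2=-\nu^2 I$ one has the closed form $\exp(M)=\cos\nu\,I+\nu^{-1}\sin\nu\,M$.

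It remains to fix the boundary data and solve. From the definition of $g$ one reads off $g(-\infty)=-\int_\R h\phi=-\eta$ and $g(+\infty)=+\eta$, while $\phi(\pm\infty)=1\mp\tfrac{\nu^2}{4}\int_\R hg$ yields the single scalar relation $\phi(+\infty)+\phi(-\infty)=2$. Feeding these into the two components of $Y(+\infty)=\exp(M)Y(-\infty)$ gives a linear system in the three unknowns $\phi(\pm\infty),\eta$; eliminating $\phi(\pm\infty)$ and using $(1+\cos\nu)^2+\sin^2\nu=2(1+\cos\nu)$ collapses it to
\[
\eta=\frac{2\sin\nu}{\nu(1+\cos\nu)}=\frac{2}{\nu}\tan\frac{\nu}{2},
\]
which is the claim.

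The main obstacle, and the reason a naive argument fails, is the sign-changing $h$: one cannot substitute $t=\Theta(x)$ to turn $K$ into a universal operator on $L^2(0,1)$ because $\Theta$ need not be monotone, and $K$ is then not self-adjoint, so the spectral expansion available for $h\ge 0$ cannot be used. The constant-coefficient transfer matrix sidesteps both difficulties at once. The remaining points are routine but worth verifying: that $w\in L^2(\R)$ and $h\phi\in L^1(\R)$ (so that $g,G$ are bounded, absolutely continuous, with limits at $\pm\infty$ and the differentiations are legitimate), which follows from Cauchy--Schwarz via $\int_\R|h\phi|=\int_\R u\,|w|\le\|u\|_{L^2}\|w\|_{L^2}$; and that the linear system is nondegenerate, its determinant being proportional to $1+\cos\nu$, which vanishes exactly at the excluded values $\nu=\lambda_k^{-1}=(2k+1)\pi$. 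At the isolated points where $\sin\nu=0$ but $1+\cos\nu\neq0$ the system forces $\eta=0$ directly, in agreement with the right-hand side; alternatively, as both sides are analytic on $\C\setminus\{0,\lambda_k^{-1}\mid k\in\Z\}$, it suffices to establish the identity for small $\nu$---where one may equally expand $\eta=\sum_{n\ge0}\nu^{2n}\langle v,K^{2n}u\rangle$ and match the Taylor coefficients of $\tfrac{2}{\nu}\tan\tfrac{\nu}{2}$---and then invoke the identity theorem, exactly as in Lemma~\ref{lem:zero_term}.
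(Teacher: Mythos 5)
Your proof is correct, and it takes a genuinely different route from the paper's. The paper restricts to $|\nu^2|<4/\|h\|_{L^1}^2$, expands $\eta=\sum_{n\geq 0}\nu^{2n}\langle v,K^{2n}u\rangle$, shows each moment $\langle v,K^{2n}u\rangle$ is $h$-independent by symmetrizing the multiple integral over permutations (the symmetrized product of signs is a.e.\ constant), imports \v{S}eba's evaluation of these moments for $h=\chi_{(0,1)}$ in terms of Bernoulli numbers to sum the series, and finally invokes the identity theorem to reach all admissible $\nu$. You instead solve $(I-\nu^2K^2)w=u$ in closed form: the factorization $w=u\phi$ and differentiation give the a.e.\ system $Y'=h(x)MY$, $Y=(\phi,g)^{\mathsf T}$, with constant $M$ satisfying $M^2=-\nu^2 I$, so that $Y(+\infty)=\exp\bigl(M\textstyle\int_\R h\bigr)Y(-\infty)=\exp(M)Y(-\infty)$; the boundary data $g(\pm\infty)=\pm\eta$ and $\phi(+\infty)+\phi(-\infty)=2$ then force $\eta=2\sin\nu/\bigl(\nu(1+\cos\nu)\bigr)=\frac{2}{\nu}\tan\frac{\nu}{2}$. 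Your algebra and the supporting analysis (absolute continuity of $\phi$, $h\phi\in L^1$ by Cauchy--Schwarz, consistency of extending $\phi:=1-\frac{\nu^2}{4}G$ off the support of $h$, validity of the transfer-matrix formula for absolutely continuous solutions with $L^1$ coefficient, and the degenerate points $\sin\nu=0$, where both sides vanish) all check out. As for what each approach buys: the paper's route records the finer combinatorial fact that every moment $\langle v,K^{2n}u\rangle$ separately depends only on $\int_\R h$, but it needs the symmetrization lemma, an external citation, and an analytic-continuation step; yours is self-contained (no combinatorics, no Bernoulli numbers, no appeal to \v{S}eba), reaches every admissible complex $\nu$ directly, makes the $h$-independence conceptually transparent---only $\Theta(+\infty)=\int_\R h$ enters the transfer matrix---and naturally mirrors the ODE technique the paper itself uses to find the eigenfunctions of $K$ in Proposition \ref{prop:K}.
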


\begin{proof}
As in the proof of Lemma \ref{lem:zero_term} we start with the case $|\nu^2|<4/\|h\|_{L^1}^2$. Then
\begin{equation} \label{eq:geom_ser2}
\langle v, (I-\nu^2 K^2)^{-1} u\rangle=\sum_{n=0}^{+\infty}\nu^{2n}\langle v, K^{2n}u\rangle,
\end{equation}
where
\begin{multline*}
 \langle v,K^{2n} u\rangle=\frac{(-1)^n}{2^{2n}}\int_{\R^{2n+1}}h(x_1)\sgn(x_1-x_2)h(x_2)\sgn(x_2-x_3)h(x_3)\ldots\\
 \sgn(x_{2n}-x_{2n+1})h(x_{2n+1})\dd x_1\ldots\dd x_{2n+1}=\\
 \frac{(-1)^n}{(2n+1)!\, 2^{2n}}\int_{\R^{2n+1}}\prod_{j=1}^{2n+1}h(x_j)\sum_{\tau\in S_{2n+1}}\prod_{l=1}^{2n}\sgn(x_{\tau(l)}-x_{\tau(l+1)})\dd x_1\ldots\dd x_{2n+1}.\\
\end{multline*}
If we define
\begin{equation*}
 C_n(x_1,x_2,\ldots,x_{2n+1}):=\sum_{\tau\in S_{2n+1}}\prod_{l=1}^{2n}\sgn(x_{\tau(l)}-x_{\tau(l+1)})
\end{equation*}
then we observe that that $C_n$ is a constant (which will be denoted by the same letter) almost everywhere on $\R^{2n+1}$. Therefore, using \eqref{eq:norm_cond} we get
$$\langle v,K^{2n} u\rangle=\frac{(-1)^n C_n}{(2n+1)!\, 2^{2n}}.$$

To find the explicit value of $C_n$ is an exercise in combinatorics. Alternatively, since we have just showed that the quantity $\langle v,K^{2n} u\rangle$ does not depend on a particular choice of $h$, one can use the result of \v{S}eba \cite{Se_89}, where it is calculated for  $h=\chi_{(0,1)}$. This way we obtain
$$\langle v,K^{2n} u\rangle= \frac{(-1)^n 4 (2^{2n+2}-1)}{(2n+2)!}B_{2n+2},$$
where $B_{2n+2}$ stands for the $(2n+2)$th Bernoulli number. Substituting this into \eqref{eq:geom_ser2} we arrive at \eqref{eq:renorm}.

By analyticity of the both sides, one can extend \eqref{eq:renorm} on $\C\setminus\{0,\,\lambda_k^{-1}\vert\, k\in\Z\}$ using the identity theorem.
\end{proof}

\begin{corollary} \label{cor:eta}
 If we started with $h$ that does not integrate to one then
 \begin{equation} \label{eq:eta_nonnorm}
  \eta=\langle v, (I-\nu^2 K^2)^{-1} u\rangle=\frac{2}{\nu}\tan\frac{\nu\int_\R h(x)\dd x}{2}. 
 \end{equation}
\end{corollary}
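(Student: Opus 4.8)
The plan is to revisit the proof of Proposition~\ref{prop:renorm} and to isolate the single spot where the normalization \eqref{eq:norm_cond} was used. There the coefficients $\langle v,K^{2n}u\rangle$ were reduced, after symmetrizing the integrand and using that the combinatorial kernel $C_n$ is almost everywhere constant, to $\frac{(-1)^nC_n}{(2n+1)!\,2^{2n}}\big(\int_\R h\big)^{2n+1}$, and \eqref{eq:norm_cond} was invoked \emph{only} to replace $\big(\int_\R h\big)^{2n+1}$ by $1$. Hence, writing $c:=\int_\R h$, the very same computation gives without any normalization
\begin{equation*}
\langle v,K^{2n}u\rangle = c^{\,2n+1}\,\frac{(-1)^nC_n}{(2n+1)!\,2^{2n}} = c^{\,2n+1}\,\frac{(-1)^n4(2^{2n+2}-1)}{(2n+2)!}B_{2n+2},
\end{equation*}
where the second equality reuses the Bernoulli-number evaluation already established in the proposition (valid since the coefficients depend on $h$ only through $\int_\R h$).

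First I would run the geometric-series argument of the proposition in the regime $\|\nu^2K^2\|<1$, which is nonempty because $\|K\|\le\tfrac12\|h\|_{L^1}$ and $|c|\le\|h\|_{L^1}$. This yields
\begin{equation*}
\eta=\sum_{n=0}^{+\infty}\nu^{2n}\langle v,K^{2n}u\rangle = c\sum_{n=0}^{+\infty}(\nu c)^{2n}\,\frac{(-1)^n4(2^{2n+2}-1)}{(2n+2)!}B_{2n+2}.
\end{equation*}
The inner sum is precisely the power series resummed in Proposition~\ref{prop:renorm}, now evaluated at $\nu c$ in place of $\nu$, so it equals $\frac{2}{\nu c}\tan\frac{\nu c}{2}$ (and $|\nu c|<2<\pi$ in this regime, so it converges). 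Pulling the prefactor $c$ through gives $\eta=\frac{2}{\nu}\tan\frac{\nu c}{2}$, which is exactly \eqref{eq:eta_nonnorm}. I would then extend the identity from the small-$\nu$ disc to all admissible $\nu$ by the identity theorem, just as at the end of the proposition.

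The only point needing care is matching the domains of analyticity of the two sides. Repeating the eigenvalue computation of Proposition~\ref{prop:K} for unnormalized $h$ replaces the boundary condition $1=-\ena{i/\lambda}$ by $1=-\ena{ic/\lambda}$, so $\sigma(K)=\{0\}\cup\{c\lambda_k\mid k\in\Z\}$ and $\sigma(K^2)=\{0\}\cup\{c^2\lambda_k^2\mid k\in\N_0\}$; thus $(I-\nu^2K^2)^{-1}$ exists exactly for $\nu\notin\{\lambda_k^{-1}/c\mid k\in\Z\}$, and these are precisely the poles $\nu c\in\{(2k+1)\pi\mid k\in\Z\}$ of $\frac{2}{\nu}\tan\frac{\nu c}{2}$, so the singular sets agree and the identity theorem applies on $\C\setminus(\{0\}\cup\{\lambda_k^{-1}/c\mid k\in\Z\})$. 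Conceptually the corollary is simply a rescaling: setting $\tilde h:=h/c$ one checks $\tilde K=K/c$, the sign flip of $\sgn(h)$ for $c<0$ being absorbed by the sign of $c$ in the prefactor so that the relation holds for either sign of $c$, whence $\eta=c\,\langle\tilde v,(I-(\nu c)^2\tilde K^2)^{-1}\tilde u\rangle$ with $\tilde h$ normalized, to which Proposition~\ref{prop:renorm} applies directly with parameter $\nu c$. The main obstacle, such as it is, is exactly this sign bookkeeping together with confirming that the singularities line up; no genuinely new estimate is required.
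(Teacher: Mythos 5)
Your proof is correct and takes essentially the same route the paper intends: the corollary is stated without separate proof precisely because, as you observe, the normalization \eqref{eq:norm_cond} enters the proof of Proposition \ref{prop:renorm} only through the factor $\big(\int_\R h\big)^{2n+1}$ in $\langle v,K^{2n}u\rangle$, and tracking it turns the series into one in $\nu c$ with an overall prefactor $c$, giving \eqref{eq:eta_nonnorm}. Your additional check that $\sigma(K)$ rescales to $\{0\}\cup\{c\lambda_k\,\vert\, k\in\Z\}$, so that the singularities of both sides line up for the identity-theorem step, is a detail the paper leaves implicit and is handled correctly (note only that the rescaling aside $\tilde h=h/c$ requires $c\neq 0$, while your series argument covers $c=0$ as well).
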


Let us now recall the formula for the resolvent of $H^\Lambda$ as was derived in \cite{BeDa_94},
\begin{equation*}
 (H^\Lambda-z)^{-1}(x,y)=\G{z}(x,y)-\G{z}(x,0)M^\Lambda(z)\G{z}(0,y),
\end{equation*}
where
\begin{multline*}
 M^\Lambda(z):=\frac{1}{\zeta(z)(\beta-\gamma)+i(\alpha+\delta)}\\
  \begin{pmatrix}
   2i\gamma+\zeta(z)(\alpha+\delta-2\cos\varphi) & \alpha-\delta-2i\sin\varphi\\
   \delta-\alpha-2i\sin\varphi & -2i\beta+\zeta(z)(\alpha+\delta-2\cos\varphi)                                                             
  \end{pmatrix}.
\end{multline*}
To conclude the proof of Theorem \ref{theo:main} we have to express $M^\Lambda(z)$ in terms of the entries of $A$ and then show that $M^\Lambda(z)=M(z)$. Following Appendix we should distinguish three cases, $\nu\in\R\setminus\pi\Z,\, \nu\in\pi\Z,$ and $\nu\in i\R\setminus\{0\}$. We will focus on the first case, the latter may be treated similarly. In the second case, if $\nu=2k\pi$ with $k\in\Z\setminus\{0\}$ then $\Lambda=I$, $\eta=0$, and $M(z)=0$, i.e., there is no point interaction in the limit as $\varepsilon\to 0+$. If $\nu=(2k+1)\pi=\lambda_k^{-1}$ with $k\in\Z$ then  we cannot use our approach. If $\nu=0$ then $\eta=1$ and one can proceed similarly as in the first case. In particular, the corresponding $M^\Lambda(z)$ is the same as the limit of \eqref{eq:MzLambda} as $\nu\to 0$.

Let $\nu\in\R\setminus\pi\Z$. Starting with $A$ of the form \eqref{eq:A_2} we can use \eqref{eq:Im_a}, \eqref{eq:cos_mu}, and \eqref{eq:sin_mu} with $n=0$ to calculate that
\begin{equation*}
 \alpha=\cos\nu+\frac{\sin\nu}{\nu}a,\, \beta=\frac{\sin\nu}{\nu}b,\, \gamma=-\frac{\sin\nu}{\nu}c,\,
 \delta=\cos\nu-\frac{\sin\nu}{\nu}a.
\end{equation*}
After some algebra we obtain
\begin{multline} \label{eq:MzLambda}
 M^\Lambda(z)=\frac{-1}{\zeta(z)\frac{\tan{\frac{\nu}{2}}}{\nu}(b+c)+i(1-\tan^2{\frac{\nu}{2}})}\\
   \begin{pmatrix}
      i\frac{2\tan{\frac{\nu}{2}}}{\nu}c +2 \zeta(z)\tan^2{\frac{\nu}{2}}& -\frac{2\tan{\frac{\nu}{2}}}{\nu}a \\
      \frac{2\tan{\frac{\nu}{2}}}{\nu}a & i\frac{2\tan{\frac{\nu}{2}}}{\nu}b+2\zeta(z)\tan^2{\frac{\nu}{2}}
   \end{pmatrix}.
\end{multline}
If we substitute \eqref{eq:renorm} into \eqref{eq:Mz} we get exactly the same matrix.

\begin{remark} \label{rem:A_1}
 If we start with $A$ of the form \eqref{eq:A_1} then 
 $$(I+Q_2)^{-1}=(I-\varphi K)^{-1}\otimes I.$$
 Since $\varphi\notin\{\lambda_k^{-1}\vert\, k\in\Z\}$, the inverse on the right-hand side always exists due to Proposition \ref{prop:K}. Next, we have
 \begin{equation} \label{eq:invAlt}
  (I+(I+Q_2)^{-1}Q_1)^{-1}=I+\frac{i\zeta(z)\varphi}{2}\langle v,\cdot\rangle(I-\varphi K)^{-1}u\otimes\sigma_1\Big(I-\frac{i\zeta(z)\varphi\tilde{\eta}}{2}\sigma_1\Big)^{-1}
 \end{equation}
 with
 \begin{equation*}
  \tilde{\eta}:=\langle v,(I-\varphi K)^{-1}u\rangle.
 \end{equation*}
 The inverse matrix on the very right of \eqref{eq:invAlt} exists because $\tilde{\eta}\in\R$. In fact, employing \eqref{eq:odd_powers} and Proposition \ref{prop:renorm} one can show that
 \begin{equation*}
  \tilde{\eta}=\langle v,(I-\varphi^2K^2)^{-1}u\rangle=\frac{2}{\varphi}\tan{\frac{\varphi}{2}}.
 \end{equation*}
 Finally, we obtain
 \begin{equation*}
  \langle v, (I+Q)^{-1}u\rangle=\tilde{\eta}\Big(I-\frac{i\zeta(z)\varphi\tilde{\eta}}{2}\sigma_1\Big)^{-1}
 \end{equation*}
 and, consequently,
 \begin{equation*}
 M(z)=-\sin\varphi\begin{pmatrix}
                  i\zeta(z)\tan\frac{\varphi}{2} & 1\\
                  1 & i\zeta(z)\tan\frac{\varphi}{2}
                 \end{pmatrix}.
 \end{equation*}
 This is exactly $M^\Lambda(z)$ with $\Lambda=\exp(A)=\exp(i\varphi)I$.
\end{remark}

\section{Renormalization of $\delta$ potentials} \label{sec:renormalization}
Speaking about renormalization we have to specify how we measure the strength of point interactions in the the first place. Let us look at the electrostatic point interaction for example. The boundary condition \eqref{eq:BC_Seba} may be rewritten as \eqref{eq:bc_int} with
\begin{equation*}
 \Lambda=\frac{1}{1+\frac{\eta^2}{4}}\begin{pmatrix}
          1-\frac{\eta^2}{4} & -i\eta\\
          -i\eta & 1-\frac{\eta^2}{4}
          \end{pmatrix}.
\end{equation*}
Given appropriate approximations of the form \eqref{eq:approx} with $h$ that does not necessarily fulfills \eqref{eq:norm_cond}, $\eta$ is given by \eqref{eq:eta_nonnorm} with $\nu=1$, i.e.,
$ \eta=2\tan(\int_\R h/2)$.
This scales non-homogeneously with respect to $h$. However, if we decide to measure the interaction strength by a new parameter  $\theta$ that obeys $\eta=2\tan(\theta/2)$ then
\begin{equation*}
 \Lambda=\begin{pmatrix}
         \cos{\theta} & -i\sin{\theta}\\
         -i\sin{\theta} & \cos{\theta}
         \end{pmatrix}
\end{equation*}
and $\theta=\int_\R h$ is obviously linear in $h$.  The reason why $\eta$ (and not $\theta$) is referred to as \emph{the coupling constant} for electrostatic potential is that by formal integration of the eigenvalue equation with potential $\eta \delta$ one arrives at \eqref{eq:BC_Seba}. During this procedure one extends $\delta$-distribution to the functions with discontinuity at $x=0$ by putting $\psi(0):=(\psi(0+)+\psi(0-))/2$. Mathematically, this was justified in \cite{BoKu_98}.

For a general point interaction, it was deduced by Hughes \cite{Hu_99} that 
\begin{equation*}
 H^\Lambda\psi=-i\dx\otimes \sigma_1 \psi+M\otimes \sigma_3 \psi+2i\otimes\sigma_1(\Lambda-I)(\Lambda+I)^{-1}\psi(0)\delta
\end{equation*}
in the sense of distributions. Since, in \eqref{eq:approx}, $\lim_{\varepsilon\to 0+}h_\varepsilon=\delta$ in the sense of distributions, it is reasonable to state that renormalization of the coupling constant does not occur if and only if
\begin{equation} \label{eq:renorm_cond}
 A=2(\exp(A)-I)(\exp(A)+I)^{-1},
\end{equation}
provided that the inverse exists. See \cite{Hu_99} for some necessary and sufficient conditions on $A$ under which \eqref{eq:renorm_cond} is fulfilled.

Now, using Appendix we can deduce that the right-hand side of \eqref{eq:renorm_cond} equals
\begin{equation*}
 W^A:=\frac{2}{\nu(\cos{\nu}+\cos\varphi)}\begin{pmatrix}
                                      \Re{a}\sin\nu+i\nu\sin\varphi & ib\sin\nu\\
                                      ic\sin\nu & -\Re{a}\sin\nu+i\nu\sin\varphi
                                     \end{pmatrix},
\end{equation*}
where $\varphi=\Im{a}$ and $\nu$ is given by \eqref{eq:nu_cond}. For $\nu=0$, one can just send $\nu$ to zero in the formula.
Firstly, consider $A$ given by \eqref{eq:A_1}. A straightforward calculation yields
\begin{equation*}
W^A=\frac{2}{\varphi}\tan\frac{\varphi}{2} A. 
\end{equation*}
Next, let $A$ be as in \eqref{eq:A_2}, i.e., $a\in\R$. Then
\begin{equation*}
W^A=\frac{2}{\nu}\tan\frac{\nu}{2}A.
\end{equation*}
Therefore, for both types of the studied matrices, $W^A$ is a multiple of $A$. However, the renormalization occurs unless $\varphi/2=\tan(\varphi/2)$ or $\nu/2=\tan(\nu/2)$, respectively. 
Finally, we conclude that if the potential in \eqref{eq:approx} is multiplied by $t\in\R$ then the strength of the point interaction scales as $2/\varphi\tan(t\varphi/2)$ and $2/\nu\tan(t\nu/2)$, respectively.

\section*{Appendix}
Here we prove that \emph{for any $\Lambda$ of the form \eqref{eq:Lambda}, there exists $A$ such that $\Lambda=\exp(A)$ and \eqref{eq:Hu_1} holds true.} Moreover, we find explicit formulae for \emph{all} $A$'s with these properties in terms of the entries of $\Lambda$.

Firstly, note that $A$ obeys \eqref{eq:Hu_1} if and only if
\begin{equation} \label{eq:A_matrix}
 A=\begin{pmatrix}
    a & i b\\
    i c & -\bar a
   \end{pmatrix},\qquad\text{where } a\in\C\text{ and }b, c\in\R.
\end{equation}

Secondly, for an arbitrary $2\times 2$ matrix $A$ such that
\begin{equation*}
 \nu:=\sqrt{\det A-\left(\frac{\tr A}{2}\right)^2}\neq 0
\end{equation*}
we have
\begin{equation} \label{eq:exp}
 \exp(A)=\exp\left(\frac{\tr A}{2}\right)\left(\cos\nu~I+\frac{\sin\nu}{\nu}\left(A-\frac{\tr A}{2}I\right)\right).
\end{equation}
Note that both branches of the square-root  lead to the same formula.
If $\nu=0$ then we write $1$ instead of $\sin \nu/\nu$ in \eqref{eq:exp}. The formula can be derived by splitting $A$ into a sum of a traceless matrix and a multiple of identity. The decomposition is unique and its two parts commute. The exponential of a multiple of identity is trivial to calculate, the exponential of a traceless matrix can be also calculated directly from the Taylor series because its square is diagonal and, therefore, we can sum up the odd and even terms separately. Note that the right-hand side of \eqref{eq:exp} again decomposes into a unique sum of  a multiple of identity and a traceless matrix. Essentially the same formula  appears in \cite{BeSo_93} but with a different proof.

Now, for $A$ of the form \eqref{eq:A_matrix}, $\tr A/2= i\Im a$, 
\begin{equation*}
 A-\frac{\tr A}{2}I=\begin{pmatrix}
                     \Re a & i b\\
                     i c   & -\Re a
                    \end{pmatrix},
\end{equation*}
and 
\begin{equation} \label{eq:nu_cond}
\nu^2=bc-(\Re a)^2\in\R.
\end{equation}
Hence, either $\nu^2\geq 0$ (equivalently, $\nu\in\R$) or $\nu^2<0$ (equivalently, $\nu=i g$ with $g\in\R\setminus\{0\}$, and consequently, $\cos \nu=\cosh g$ and $\sin \nu=i\sinh g$). Now, we write
\begin{equation} \label{eq:Lambda_dec}
\Lambda=\ena{i\varphi}\left(\frac{\alpha+\delta}{2}~I+\begin{pmatrix}
                                                       \frac{\alpha-\delta}{2} & i\beta\\
                                                       -i\gamma & \frac{\delta-\alpha}{2}
                                                      \end{pmatrix}
\right).
\end{equation}
and compare it with \eqref{eq:exp}.

\subsubsection*{The case $(\alpha+\delta)/2\in(-1,1)$} 
In this case $\nu\in\R\setminus\pi\Z$ and we have
\begin{align}
 & \Im a=\varphi +n\pi,\quad n\in\Z, \label{eq:Im_a}\\
 & (-1)^n\cos \nu=\frac{\alpha+\delta}{2}, \label{eq:cos_mu}\\
 & (-1)^n\frac{\sin \nu}{\nu}\Re a=\frac{\alpha-\delta}{2},\quad (-1)^n\frac{\sin \nu}{\nu} b=\beta,\quad  (-1)^n\frac{\sin \nu}{\nu} c=-\gamma. \label{eq:sin_mu}
\end{align}
Note that $\eqref{eq:Im_a}$ follows also directly from the formula $\det(\Lambda)=\exp(\tr A)$.
Clearly, \eqref{eq:cos_mu} has exactly two solutions in any half-closed interval of length $2\pi.$ With any of these solutions we calculate $\Re a,\, b$, and $c$ from \eqref{eq:sin_mu}. Since
\begin{equation*}
bc-(\Re a)^2=\frac{\nu^2}{\sin^2 \nu}\left(-\beta\gamma-\left(\frac{\alpha-\delta}{2}\right)^2\right)=\frac{\nu^2}{\sin^2 \nu}\left(1-\left(\frac{\alpha+\delta}{2}\right)^2\right)=\nu^2,
\end{equation*}
the equations \eqref{eq:cos_mu} and \eqref{eq:sin_mu} are compatible with \eqref{eq:nu_cond}. Here,
we used \eqref{eq:lambda_det} and \eqref{eq:cos_mu} in the second and the third equality, respectively.

\subsubsection*{The case $(\alpha+\delta)/2=\pm 1$} 
In this case, $\nu=m\pi$ with $m\in\Z$. Since we can absorb the sign of $(\alpha+\delta)/2$ into the phase factor of $\Lambda$, we may only focus on the case $(\alpha+\delta)/2=1$. Firstly, for $m\neq 0$, $\exp(A)=\exp{(i\Im a)}(-1)^m I$. Therefore, if $\Lambda=\exp{(i\varphi)} I$ then $\exp{(A)}=\Lambda$ for arbitrary $A$ that satisfies
\begin{equation*}
 bc-(\Re{a})^2=(m\pi)^2\quad\text{and}\quad\Im{a}=\varphi+m\pi+2n\pi 
\end{equation*}
with any $n\in\Z$. If $\Lambda$ is not a multiple of identity then there is no solution $A$ with $m\neq 0$.

Secondly, let $m=\nu=0$. Then $n$ in \eqref{eq:Im_a} must be even so that \eqref{eq:cos_mu} is fulfilled. Moreover, we have
\begin{equation*}
 \Re a=\frac{\alpha-\delta}{2},\quad  b=\beta,\quad  c=-\gamma
\end{equation*}
instead of \eqref{eq:sin_mu}. This yields a valid solution because it is  compatible with \eqref{eq:nu_cond}.

\subsubsection*{The case $(\alpha+\delta)/2\in\R\setminus[-1,1]$}
In this case $\nu=ig$, with $g\in\R\setminus\{0\}$, and we have \eqref{eq:Im_a} together with
\begin{align}
 & (-1)^n\cosh g=\frac{\alpha+\delta}{2}, \label{eq:cosh_g}\\
 & (-1)^n\frac{\sinh g}{g}\Re a=\frac{\alpha-\delta}{2},\quad (-1)^n\frac{\sinh g}{g}b=\beta,\quad (-1)^n\frac{\sinh g}{g}c=-\gamma. \label{eq:sinh_g}
\end{align}
Clearly,  $n$ must be even for $(\alpha+\delta)>0$ and odd in the other case. Now, the two solutions of \eqref{eq:cosh_g} differ only by sign, so if we substitute either of them into \eqref{eq:sinh_g} we arrive at the same formulae for $\Re a,\, b$, and $c$. Therefore, the only ambiguity remains in $\Im a$ that is given modulo $2\pi$. The compatibility of \eqref{eq:cosh_g} and \eqref{eq:sinh_g} with \eqref{eq:nu_cond} follows from \eqref{eq:lambda_det} again.

\begin{remark}
Using \eqref{eq:exp} one can infer that starting with $A$ given by \eqref{eq:A_matrix} one always ends up with $\Lambda=\exp(A)$ of the form \eqref{eq:Lambda}.
\end{remark}

\section*{Acknowledgments}
This work was supported by the project CZ.02.1.01/0.0/0.0/16\_019/0000778 from the European Regional Development Fund and by the grant No. 17-01706S of the Czech Science Foundation (GA\v{C}R).
The author wishes to express his thanks to V. Lotoreichik for many fruitful discussions and to D. Krej\v{c}i\v{r}\'{i}k for pointing out a useful reference.


\begin{thebibliography}{99}
\bibitem{BeDa_94}S. Benvegnu, L. Dabrowski, Relativistic point interaction in one dimension, Lett. Math. Phys. \textbf{30}, 159--167 (1994).
\bibitem{Se_89}P. \v{S}eba, Klein's paradox and the relativistic point interaction, Lett. Math. Phys. \textbf{18} (1989).
\bibitem{Hu_97}R.J. Hughes, Relativistic point interactions: approximation by smooth potentials, Reports on Mathematical Physics \textbf{39} (1997).
\bibitem{Hu_99}R.J. Hughes, Finite-rank perturbations of the Dirac operator, Journal of Mathematical Analysis and Applications \textbf{238} (1999).
\bibitem{MaPi_18}A. Mas, F. Pizzichillo, Klein's paradox and the relativistic $\delta$-shell interaction in $\R^3$, Analysis \& PDE \textbf{11} (2018).
\bibitem{MaPi_17}A. Mas, F. Pizzichillo, The relativistic spherical $\delta$-shell interaction in $\R^3$: spectrum and approximation, J. Math. Phys. \textbf{58} (2017).
\bibitem{FaGr_87}P. Falkensteiner, H. Grosse, Quantization of Fermions Interacting with Point-like External Fields, Lett. Math. Phys. \textbf{14} (1987).
\bibitem{Ge_Se_87}F. Gesztesy, P. Šeba, New analytically solvable models of relativistic point interactions, Lett. Math. Phys. \textbf{13}, 345--358 (1997).
\bibitem{PaRi_14}K. Pankrashkin, S. Richard, One-dimensional Dirac operators with zero-range interactions: Spectral, scattering, and topological results, J. Math. Phys. \textbf{55} (2014).
\bibitem{AlVi_00}V. Alonso, S. De Vincenzo, Delta-type Dirac point interactions and their nonrelativistic limits, J. Theor. Phys. \textbf{39} (2000).
\bibitem{BeMaNe_08}J. Behrndt, M. Malamud, H. Neidhardt, Scattering matrices and Weyl functions, Proc. London Math. Soc. \textbf{97} (2008).
\bibitem{CaMaPo_13}R. Carlone, M. Malamud, A. Posilicano, On the spectral theory of Gesztesy-\v{S}eba realization of 1-D Dirac operators with point interactions on a discrete set, J. Differential Equations \textbf{254} (2014).
\bibitem{Kato}T. Kato, \emph{Perturbation Theory for Linear Operators}, Springer-Verlag, Berlin Heidelberg, 1995.
\bibitem{Ka_66}T. Kato, Wave operators and similarity for some non-selfadjoint operators, Math. Ann. \textbf{162} (1966).
\bibitem{KoKu_66}R. Konno, S.T. Kuroda, On the finiteness of perturbed eigenvalues, J. Fac. Sci. Univ. Tokyo Sect. IA Math. \textbf{13} (1966).
\bibitem{AlGe_04}S. Albeverio,F. Gesztesy, R. Høegh-Krohn, H. Holden,  \emph{Solvable models in Quantum Mechanics}, second ed.,  AMS, Providence,  2004.
\bibitem{BoKu_98}J. Boman, P. Kurasov, Finite rank singular perturbations and distributions with discontinuous test functions, Proc. Amer. Math. Soc. \textbf{126} (1998).
\bibitem{BeSo_93}D.S. Bernstein, W. So, Some explicit formulas for the matrix exponential, IEEE Trans. Automat. Control \textbf{38} (1993).
\end{thebibliography}
\end{document}